\documentclass[10pt,conferences]{IEEEtran}
\pdfoutput=1
\usepackage[ruled,linesnumbered]{algorithm2e}
\usepackage{amssymb}
\usepackage[english]{babel}
\usepackage{caption}
\usepackage{subcaption}
\usepackage{float}
\usepackage{amsmath}
\usepackage{graphicx}
\usepackage{amscd}
\usepackage{amsthm}
\usepackage{verbatim}
\usepackage{tabularx}
\usepackage{proof}
\usepackage{alltt}
\usepackage{lipsum, color}
\usepackage{xspace}
\usepackage{fancybox}
\pagenumbering{gobble}
\usepackage{wrapfig}
\usepackage{mathtools,cuted}
\usepackage{listings}
\usepackage{rotating}
\usepackage{xcolor,colortbl}
\usepackage[textsize=small,colorinlistoftodos]{todonotes}
\usepackage{algpseudocode}
\usepackage{lipsum, color}

\newtheorem{definition}{Definition}[section]

\newtheorem{theorem}{Theorem}[section]

\newcommand{\NNR}{\ensuremath{\mathbb{R}_{\geq 0}}}

\IEEEoverridecommandlockouts
\title{Combining Task-level and System-level Scheduling Modes for Mixed Criticality Systems}
\author{
\IEEEauthorblockN{Jalil Boudjadar$^1$}
\and
\IEEEauthorblockN{Saravanan Ramanathan$^2$}
\and
\IEEEauthorblockN{Arvind Easwaran$^2$}
\and
\IEEEauthorblockN{Ulrik Nyman$^3$}
\IEEEauthorblockA{~\\ $^1$ Aarhus University Denmark, $^2$ Nanyang Technological University, $^3$ Aalborg University Denmark}
}

\begin{document}
\maketitle

\begin{abstract}
Different scheduling algorithms for mixed criticality systems have been recently proposed. The common denominator of these algorithms is to discard low critical tasks whenever high critical tasks are in lack of computation resources. This is achieved upon a switch of the scheduling mode from \textbf{Normal} to \textbf{Critical}. We distinguish two main categories of the algorithms: \textit{system-level mode switch} and \textit{task-level mode switch}. System-level mode algorithms allow low criticality (LC) tasks to execute \textit{only} in normal mode. Task-level mode switch algorithms enable to switch the mode of an individual high criticality task (HC), from low (LO) to high (HI), to obtain priority over all LC tasks. This paper investigates an online scheduling algorithm for mixed-criticality systems that supports dynamic mode switches for both task level and system level. When a HC task job overruns its LC budget, then only that particular job is switched to HI mode. If the job cannot be accommodated, then the system switches to Critical mode. To accommodate for resource availability of the HC jobs, the LC tasks are degraded by stretching their periods until the Critical mode exhibiting job complete its execution. The stretching will be carried out until the resource availability is met. 
We have mechanized and implemented the proposed algorithm using Uppaal. To study the efficiency of our scheduling algorithm, we examine a case study and compare our results to the state of the art algorithms. 
\end{abstract}

\IEEEpubid{978-1-7281-2923-5/19/\$31.00 ˜\copyright˜2019 IEEE}
\IEEEpubidadjcol

\section{Introduction}
Modern embedded systems are achieved via the integration of different system components having different criticality levels on a single platform. Such systems are known by \textit{mixed criticality systems} (MCS). Examples are safety control systems in avionics \cite{Huyck12} and  automotive applications \cite{iso}. Mixed criticality systems are subjected to certifications dictated by the standards of different application areas, where different criticality levels require different assurance levels \cite{Ouedraogo18}. The consequences of missing a deadline vary in severity from task to task, according to the given criticality levels. It is therefore clear that highly critical components require a rigorous analysis to deliver a formal assurance about safety under error-free conditions, and the presence of certain defined errors maintains the behavior predictable \cite{Burns18}.  

During operation, it is important that critical tasks are supplied with sufficient computation resources to meet their time constraints. Running low critical tasks (\textbf{LC}) with the same privilege as high critical tasks (\textbf{HC}) enables the system functionality to be fully embraced \cite{Howard17,Su2015}, however this leads to potential violation of the critical tasks safety e.g deadline miss. An intuitive alternative is to prioritize critical tasks eternally over low/non critical ones by the use of \textit{criticality-as-priority}. Prioritizing critical tasks may require to discard low critical tasks. This may degrade the quality of service and functionality of the system \cite{Liu16,Jan13}.

Since Vestal's seminal work \cite{Vestal07}, different scheduling algorithms for mixed criticality systems have been introduced \cite{Lee17,Su13,Gettings15,Baruah12}. Such scheduling protocols rely on the assumption that a task can have different Worst Case Execution Time (WCET) bounds if one considers different confidence levels. This is due to the fact that determining the exact WCET of a task code is very pessimistic \cite{Burns10,Loefwenmark16}. A task's WCET can be bounded according to different confidence levels where the higher the confidence is the larger WCET will be \cite{Vestal07}.   

\IEEEpubidadjcol
Mixed criticality scheduling algorithms commonly use \textit{scheduling modes} to decide which tasks to consider for scheduling at any point in time \cite{Burns17}. In essence, a scheduling mode dictates the tasks that can be prioritized/ignored according to the actual workload, so that tasks of a given criticality level obtain privilege over the rest of the tasks regardless of the actual priorities. 
Within a given scheduling mode, tasks are scheduled according to the adopted scheduling policy. 

Scheduling algorithms for mixed criticality systems can be categorized, based on the type of mode switch scenario, in two groups: \textit{system-level mode} and \textit{task-level mode}. System-level mode scheduling algorithms \cite{Facs16,Lee17,zeroslack} employ two scheduling modes \textbf{Normal} and \textbf{Critical}. HC and LC tasks are equally scheduled under Normal mode. A mode switch from Normal to Critical happens whenever there is a potential insufficiency of computation resources due to one or more HC tasks exhibiting high confidence behavior, i.e., tasks run for more than their low confidence WCET. In Critical scheduling mode, LC tasks are either entirely dropped \cite{Baruah12,Ekberg14}, or run with a degraded service \cite{Facs16,Su13,Gettings15} to accommodate HC tasks. The system-level algorithms commonly penalize LC tasks \cite{Facs16,Lee17,zeroslack} as the system mode switch can be decided when a single HC task overruns its low confidence WCET.

Task-level mode switch \cite{Lee17,Huang13} is motivated by the fact that not necessarily all \textbf{HC} tasks exhibit high criticality behavior (largest WCET) at the same time. Thus,  only the HC tasks running high confidence WCET obtain priority over the rest of tasks. Each HC task runs in \textbf{LO} mode and switches to \textbf{HI} mode whenever it overruns its low confidence WCET. Such overruns can lead to insufficiency of computation resources where HC tasks running LO mode miss their deadlines if their priorities are lower than those of LC tasks. 


In this paper, we introduce a new elastic control-based scheduling algorithm by combining the aforementioned categories. The resulting algorithm relies on a \textit{job-level} mode switch technique, where the system mode switch occurs only when there is a \textbf{HC} task job, running \textbf{LO} mode, in risk to  miss its deadline due to a low priority. We restrict HC behavior to only the job that either exceeds its low confidence WCET or triggers a systems mode switch. On Critical mode, we run LC tasks under a degraded mode (periods stretching) rather than completely discarded. When the workload permits, LC tasks are compensated by shrinking subsequent periods to amortize the degradation. Our scheduling algorithm enables runtime resilience and recovery from overload transient scenarios. 

The rest of the paper is organized as follows: Section~\ref{sec:relatedwork} cites the relevant related work. Section~\ref{sec:combination} presents our multimode scheduling setting for MCS. In Section.~\ref{sec:analysis}, we show how to analyze the  schedulability. Section~\ref{sec:casestudy} is a case study. Finally, Section~\ref{sec:conclusion} concludes the paper.

\section{Related Work}
\label{sec:relatedwork}

Since Vestal's~\cite{Vestal07} seminal work on \textit{mixed-criticality} (MC) systems, several studies have been carried out in the recent past for MC scheduling. Most existing works on MC scheduling~\cite{Arvind13,Baruah12,Ekberg14,zeroslack,Park2011} rely on system-level mode switch i.e., when a HC task executes more than its low confidence WCET the remaining HC tasks  also simultaneously exhibit HC behavior. In order to guarantee resources for the HC tasks, many solutions employ a very pessimistic approach that completely discards all the LC tasks upon mode transition~\cite{Arvind13,Baruah12,Ekberg14}. There are some works to delay the dropping of LC tasks by postponing the mode switch instant~\cite{Santy12,Xiaozhe16,Hu16,Facs16}. Santy et al.~\cite{Santy13} and Bate et al.~\cite{Bate15} proposed some techniques to minimize the duration for which the system is in mode HI so that to reduce the non-service duration of LC tasks.

In this context, a plethora of studies has been carried out to improve the service offered to the LC tasks~\cite{Burns13,Xiaozhe16,Jan13,Su14,Su16,Su16_2,Baruah16,Liu16,Pathan17,Liu18,Fleming14,Gettings15,Huang15}. These approaches can be classified into four major categories:
\begin{enumerate}
\item \emph{Elastic Scheduling}. The dispatch frequency of LC tasks is reduced (extending their periods) in the HI mode~\cite{Burns13,Jan13,Su14,Su16,Su16_2,Facs16}.
\item \emph{Imprecise Computation/Reduced Execution}. LC tasks are executed with reduced execution budget when the system is in mode HI~\cite{Burns13,Baruah16,Liu16,Xiaozhe16,Pathan17,Liu18}.
\item \emph{Selective Degradation}. Depending on the budget availability in the HI mode, only a certain subset of LC jobs/tasks are executed~\cite{Xiaozhe16,Fleming14,Gettings15}.
\item \emph{Processor speedup}. Huang et al.~\cite{Huang14,Huang15,Boudjadar17} proposed a dynamic processor speedup technique to guarantee resources for HC tasks instead of degrading the service to the LC tasks in the HI mode.
\end{enumerate}

However, all the above works employ an impractical assumption that all the HC tasks in the system simultaneously exhibit HC behavior. On the contrary, there are very few works that relax the system-level mode switch assumption and employ task-level mode switch~\cite{Huang13,Ren15,Xiaozhe15,Lee17}. Task-level mode switch algorithms restrict the impact of HC tasks exceeding their low confidence WCET and limit the service degradation of LC tasks. 

Huang et al.~\cite{Huang13} proposed a constraint graph to map the execution dependencies between HC tasks and LC tasks: when a HC task exhibits HC behavior only the LC tasks connected to it are dropped. However, in their analysis they consider all HC tasks utilize their high confidence WCET. Ren et al~\cite{Ren15} proposed a similar technique in which each HC task is grouped with some LC tasks and only these tasks are affected if that particular HC task exhibits HC behavior.

Gu et al~\cite{Xiaozhe15} presented a hierarchical component-based scheduling technique that allows multiple HC tasks to be grouped within a component. If any HC task in a component switches to HI mode, all the HC tasks in the component are run with their high confidence WCET and the LC tasks within that component are discarded. The authors also limit the number of components that can safely switch to HI mode using a tolerance parameter to trigger the system  mode switch.

Erickson et al.~\cite{Erickson15} proposed a scheduling framework for multicore mixed criticality systems to recover from transient overload scenarios. The recovery relies on scaling the task inter-release times to reduce the jobs frequency. The underlying schedulability analysis requires that all tasks must run the WCETs of the same confidence level, which implies to rerun the analysis for each criticality level separately. Compared to that, our schedulability analysis is performed across different criticality levels at once. 

Lee et al.~\cite{Lee17} proposed an online schedulability test for task-level mode switch and an adaptive runtime task dropping strategy that minimizes LC task dropping. However, they consider all the jobs of a HC task exhibit HI mode behavior which may be a pessimistic assumption. Recently, Papadopoulos et al.~\cite{Papadopoulos18} presented a control approach to achieve resilience in MC systems. HC tasks and LC tasks are executed using a server-based approach and based on the runtime property of the tasks the budget allocated to these servers is dynamically varied. When a HC server exhibits HC behavior, the LC servers are under-scheduled to meet the demand of HC servers. We rely on the same control-based mechanism to achieve LC task periods stretching, however we  compensate such a degradation by shrinking LC task periods whenever the HC tasks workload permits.  

In contrast to the above studies, we propose a dynamic mode switching algorithm that allows both task-level and system-level mode transitions. In particular, we restrict the HC behavior to only the job that either exceeds its low confidence WCET or triggers a systems mode switch. At the same time, we offer a minimum service to all LC tasks in the Critical mode using elastic scheduling instead of dropping them.

\section{Multimode Scheduling of MCS}
\label{sec:combination}

In this section, we combine system-level and task-level scheduling modes to produce a multimode scheduling algorithm for MCS. Our mixed criticality scheduling algorithm enables efficient mode switches for HC tasks, by predicting the workload causing HC tasks to fail. 

\subsection{System model}
We consider deadline-implicit periodic task systems with two distinct criticality levels: high (HC) and low (LC), so that each mixed criticality (MC) task can be a LC or HC. By \textit{default criticality}, we refer to the criticality level assigned to a given task at the design stage (constant). The \textit{runtime criticality} of a task is in fact the (dynamic) criticality level assigned to the task according to the scheduling mode and/or task behavior. 

\paragraph{Assumptions} We consider the following assumptions:
\begin{itemize}
\item Tasks are preemptible.
\item All tasks are assigned a static criticality level (LC or HC) by design, called default criticality.
\item The execution of a HC task must not be discarded under any runtime circumstances.
\item The runtime criticality of a LC task can never be upgraded to HC.
\item LC tasks stick always to their low confidence WCET.
\item There is no dependency between LC and HC tasks.
\end{itemize}
 
\paragraph{Notations}
\begin{itemize}
\item We use $\pi_i$ to refer to a single task, and $\Pi$ to refer to the set of tasks. 
\item $Mode(t)\in\{Normal, Critical\}$ states the system scheduling mode at time point $t$. 
\item To track the mode of individual HC tasks over runtime, we introduce a function $\Omega: \{\pi_i \mid \chi_i=\textbf{HC}\} \times \NNR \rightarrow\{\textbf{HI}, \textbf{LO}\}$. For the sake of notation, we write $\Omega(\pi_i,t)$ for the mode of task $\pi_i$ at time point $t$. 
\end{itemize}

\begin{definition}[Tasks]
A task $\pi_i$ is given by $\langle T_i,C_i^l,C_i^h,\chi_i, \rho \rangle$ where:
\begin{itemize}
	\item $T_i$ is the task period.
	\item $C_i^l \in \NNR$ and $C_i^h\in \NNR$ are the worst case execution time for low and high confidence levels respectively. We assume that $C_i^h\ge C_i^l$ for HC tasks, and $C_i^h= C_i^l$ for LC tasks.
  \item $\chi_i \in \{\textbf{LC},\textbf{HC}\}$ is the default (constant) criticality of the task.
	\item $\rho$ is the task priority.
\end{itemize}
The task runtime mode $\Omega()$ will be updated on the fly according to the actual task execution budget.
\end{definition}

We distinguish between the task mode $\Omega(\pi_i,t)$, which is individual for each task, and the system scheduling mode $Mode(t)$. A task scheduling mode is driven by its execution time, so that whenever the execution violates the low confidence WCET $C_i^l$ the task mode is elevated to \textbf{HI}. The individual mode of a HC task switches independently. The overrun of $C_i^l$, by a HC task, is considered to be non-deterministic. 

The system scheduling mode is common for all tasks. It determines the tasks that are allowed to execute, and the main scheduling criterion (criticality, priority or both). Under \textbf{Normal} mode, all ready tasks are equally scheduled according to the adopted scheduling policy. However, when the system mode is \textbf{Critical} criticality levels are used as the main scheduling criterion to arbitrate tasks. If two tasks have the same criticality level, then we refer to their actual priorities. In such a scheduling mode, \textbf{LC} tasks may not be scheduled given their low criticality level. A stretching of the LC task periods is applied while the system runs in mode Critical. Thus, reducing the utilization of LC tasks to accommodate HC tasks. Whenever the system scheduling mode returns to Normal, the periods of LC tasks are then shrunk to amortize the delays created by the stretching. The shrinking can start only after LC tasks complete the jobs of the periods experienced a stretching.

Taskset $\Pi$ will be scheduled by the real-time operating system according to a scheduling function $Sched$. In fact, $Sched()$ implements an actual static priority-based scheduling policy such as Fixed Priority scheduling (FP). 
 \[\mathit{Sched}: 2^{\Pi} \times \NNR \rightarrow \Pi\]

In a similar way, we define a (\emph{Intermediate}) scheduling function $Sched_I(\Pi,t)$ which employs both task mode and priority. Thus, a task gets scheduled at a given time point $t$ if it has either a higher task mode\footnote{We consider that $HI>LO$, but HC tasks running in mode LO are comparable to LC tasks.} compared to any ready task, or the same task mode but a higher priority.   
\vspace{-2mm}

\[
\noindent
\begin{array}{ll}
\hspace{-2mm} \mathit{Sched}_I(\Pi,t)= \pi_i \mid & \hspace{-3mm}
 \mathit{Ready}(\pi_i,t) \wedge \forall \pi_j \in \Pi ~ 
  \mathit{Ready}(\pi_j,t) \Rightarrow \\ & \hspace{-12mm} 
\left \{\begin{array}{l}  
\Omega(\pi_j,t)< \Omega(\pi_i,t) \\
\vee \\
\Omega(\pi_j,t)=\Omega(\pi_i,t) \wedge \mathit{Sched}(\{\pi_i,\pi_j\},t)=\pi_i
\end{array}
\right. 
\end{array}
\]  
\noindent where $\mathit{Ready}(\pi_i,t)$ is a predicate stating whether a given task is ready at a given time point. As a third stage, we define a more restrictive scheduling function $Sched_C()$ which employs \emph{Criticality} level, task mode and priority to decide which task to be scheduled at any point in time.
\vspace{-3mm}

\[
\begin{array}{ll}
\hspace{-2mm} \mathit{Sched}_C(\Pi,t)= \pi_i \mid & \hspace{-3mm} \mathit{Ready}(\pi_i,t) \wedge \forall \pi_j \in \Pi ~ 
  \mathit{Ready}(\pi_j,t) \Rightarrow \\ & \hspace{-6mm} 
\left \{\begin{array}{l}  
\chi_j<\chi_i \\
\vee \\ 
(\chi_j=\chi_i) \wedge \Omega(\pi_j,t)< \Omega(\pi_i,t) \\
\vee \\
(\chi_j=\chi_i) \wedge (\Omega(\pi_j,t)=\Omega(\pi_i,t)) \\ \hspace{13mm} \wedge ~\mathit{Sched}(\{\pi_i,\pi_j\},t)=\pi_i
\end{array}
\right. 
\end{array}
\] 

The utilization of $\mathit{Sched}_I()$, $\mathit{Sched}_C()$ and $\mathit{Sched}()$ is described in the next sections. In the rest of this section, we present our task-level and system-level mode switches and how to combine both modes to achieve a more flexible scheduling.

\subsection{Task-level mode switch}

\paragraph{Low criticality tasks behavior}
Low criticality tasks are not concerned by the task mode switch because they are not concerned by rigorous certification as high criticality tasks. They are also assumed to run always the same WCET, i.e. $C^l=C^h$. Figure~\ref{fig:LCbehavior} illustrates the LC tasks behavior. In fact, LC tasks execute regularly next to HC tasks as long as the system scheduling mode is Normal. Under that context LC tasks are equally scheduled, using $\mathit{Sched}()$, as HC tasks running in mode LO. 

\begin{figure}
\centering
\caption{Low criticality task behavior}
\label{fig:LCbehavior}
\includegraphics[scale=0.43]{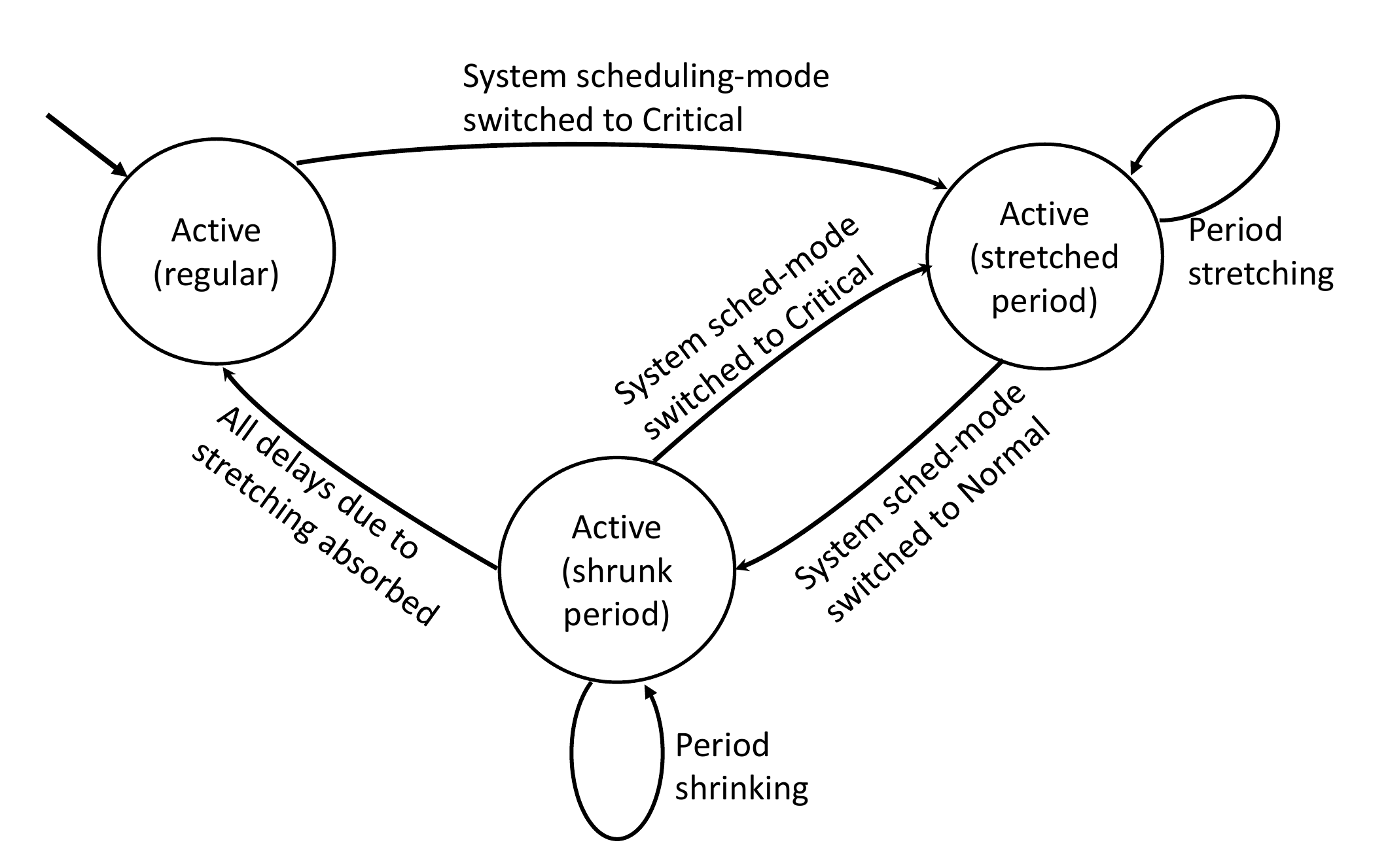}
\vspace{-9mm}
\end{figure}

Upon a switch of the system mode to Critical, the current job periods of LC tasks are stretched to reduce their utilization and the frequency of releasing new jobs. The system is then declared to be performing a stretching pattern. We introduce a variable ${\mathcal P}\in \{Stretching, Shrinking, Regular\}$ to store the current system pattern. 

 To track the stretching duration, we use a variable $s$ which indicates how much an LC task needs to be compensated in order to absorb the delays caused by the stretching. The stretching of LC tasks is a degraded operation mode. 

Whenever the system scheduling mode is back to Normal and the current stretched periods expire, the stretching is interrupted and the LC tasks can then execute regularly. To amortize the slack time created by stretching, the scheduler applies a shrinking to LC task periods \footnote{The system pattern is then updated accordingly, ${\mathcal P}=Shrinking$.}. The shrinking pace depends on the system workload and the LC task periods length. The fewer HC tasks run $C^h$ the larger the shrinking will be. 
Once all the delays introduced due to stretching are amortized, LC tasks run regular periods\footnote{${\mathcal P}=Regular$.}. 
 
\begin{figure}
\begin{center}
\caption{Stretching/shrinking of LC task periods}
\label{fig:LCruntime}
\includegraphics[scale=0.56]{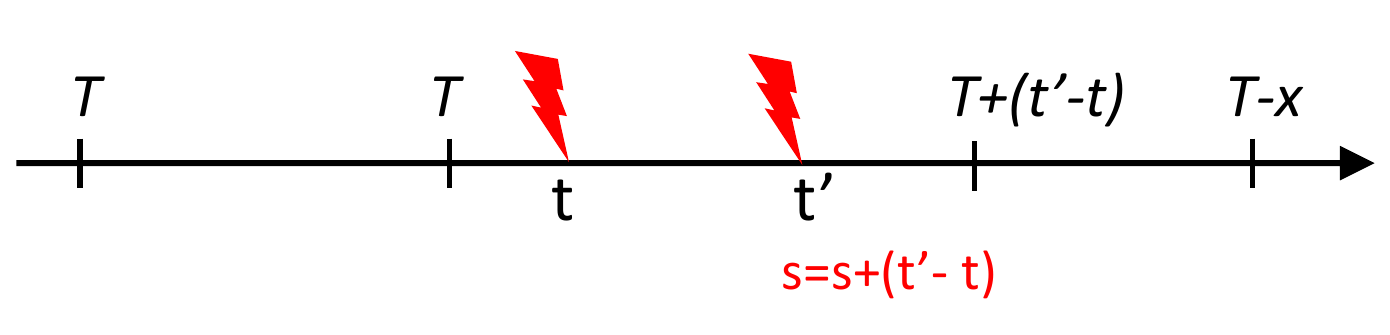}
\vspace{-9mm}
\end{center}
\end{figure}
       
Figure~\ref{fig:LCruntime} depicts an example of stretching and shrinking operations for an LC task period. Within the initial period, the task executes normally. After releasing the second period, a system mode switch (from Normal to Critical) happens at time $t$ causing the period to be stretched until time instant $t'$ where another system mode switch (Critical to Normal) occurs. The stretching duration $t'-t$ is accumulated in $s$. The third period will then be shrunk with $0\le x \le s$ to absorb the delay $s$. If the delay $s$ is not completely absorbed in one period, subsequent periods will be shortened accordingly. Formal calculation of the stretching/shrinking durations is provided in Section~\ref{sec:systemmodes}

Given that $C^l$ and $C^h$ are equal for each LC task, we simply write $C$. The utilization of a LC task is defined as follows:
\begin{itemize}
	\item Regular activation: $U_{L_{i}} = \frac{C_i}{T_i}$
	\item During shrinking with a duration $\delta$: $U_{L_{i}}^{\delta} = \frac{C_i}{T_i-\delta}$ such that $C_i\le (T_i-\delta)$.
\end{itemize}


\paragraph{High criticality tasks behavior}
Each individual HC task starts at mode LO and can change its mode independently from the rest of tasks. By default, on the release of a new period the HC task runs LO mode and whenever $C^l$ overrun happens the task mode switches to HI \cite{Lee17}. Such a task mode is maintained until the expiry of the given period. The budget overrun is \textit{non-deterministic}. Figure~\ref{fig:HCbehavior} illustrates the mode switches of HC tasks.


Whenever a HC task switches to mode HI, $\Omega(\pi_i,t)=\textbf{HI}$, it obtains the scheduling privilege over all LC tasks. Besides, a HC task running in HI mode has priority over all HC tasks running in LO mode. Among the HC tasks running HI mode, the task having the highest priority is scheduled first. Function $Sched_{I}()$ is used to schedule tasks according to these criteria.  
   
\begin{figure}
\begin{center}
\caption{High criticality task behavior}
\label{fig:HCbehavior}
\includegraphics[scale=0.43]{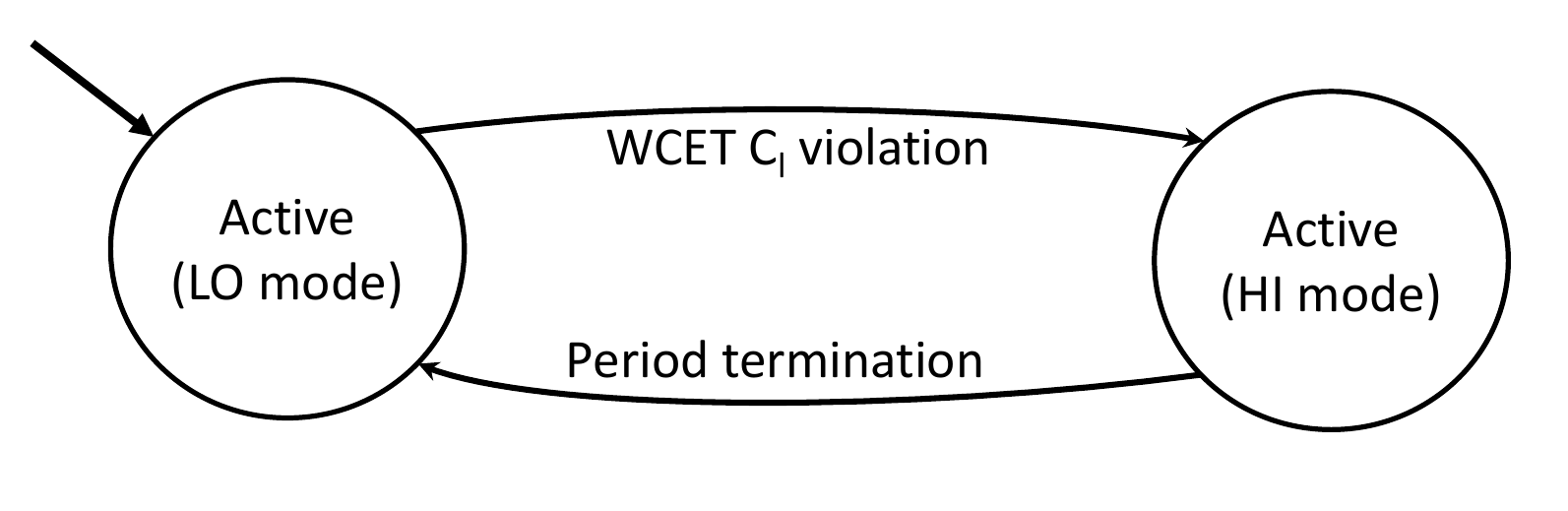}
\end{center}
\vspace{-11mm}
\end{figure}
 
However, given that HC tasks running LO mode do not have privilege over LC tasks, a HC task can miss its deadline under LO mode in case there is a lack of computation resources to execute both HC and LC tasks. This can be considered to be the major drawback of both task-level and system-level scheduling algorithms of mixed criticality systems. To circumvent this issue, our scheduling algorithm can assign a HC task running in LO mode the privilege over LC tasks even though it does not overrun its low confidence WCET $C^l$. 

We define the utilization of a HC task $\pi_i$ running mode HI, respectively mode LO, by: 
\[U_{H_i}=\frac{C^H_i}{T_i},~ \text{respectively}~~ U_{L_i}=\frac{C^L_i}{T_i}\]

 We also use $U_{L}$ to refer to the utilization of LC tasks. To specify the task mode switches, we introduce the following functions:
\begin{itemize}
\item $Status(\pi_i,t)\in\{{Ready, Running, Done}\}$ returns the status of any task $\pi_i$ at any point in time $t$.
\item $\Lambda(\pi_i,t)$ returns the budget consumed at time $t$ by the current release of a task $\pi_i$. $\Lambda(\pi_i,t)$ is not accumulative, i.e., it resets to zero upon each period release.
\end{itemize}

Formally, the runtime mode of a high criticality task switches from LO to HI as follows:

\begin{center}
\fbox{
\parbox[t][3.4em][t]{0.47\textwidth}{
$\frac{\begin{array}{c} \forall ~ \pi_i\in\Pi \mid \chi_i=HC,~\forall t \mid \\
        Status(\pi_i,t)\neq Done \wedge \Lambda(\pi_i,t) \ge C^l_i \wedge \Omega(\pi_i,t)=LO
				\end{array}}
				{\begin{array}{c} \Omega(\pi_i,t)\mapsto HI\end{array}}$
}}
\end{center}

Accordingly, the runtime criticality of a HC task returns to LO mode whenever its period expires as shown below.

\begin{center} 
\fbox{
\parbox[t][3.3em][t]{0.47\textwidth}{
$\frac{\begin{array}{c} \forall ~ \pi_i\in\Pi \mid \chi_i=HC,~\forall t \mid \Omega(\pi_i,t)=HI ~~\wedge \\ 
Status(\pi_i,t)= Done \wedge t ~\%~ T_i=0 
				\end{array}}
				{\begin{array}{c} \Omega(\pi_i,t)\mapsto LO \end{array}}$
}}
\end{center}
\vspace{2mm}

$\%$ is the arithmetic modulo operator. One can see that the task-level mode switch relies on the violation of $C^l$ and does not guarantee the feasibility of HC tasks running LO mode. 

\subsection{System-level mode switch}
\label{sec:systemmodes}
As stated earlier, the task level mode can be used to prioritize HC tasks running in HI mode. 
The drawback of the task level scheduling mode is then \textit{how to prioritize a HC task running a LO mode when the system workload lacks computation resources}. To circumvent this drawback, our system level mode complements the task level mode and  constrains the classic system level mode switches with the workload of HC tasks running both \textbf{LO} and \textbf{HI} modes equally. Let us illustrate the aforementioned drawback scenario for the system of Table~\ref{tab-example}. 

\begin{table}[htb!]
\vspace{-2mm}
\begin{center}
\caption{Example of a failure case for both system and task level scheduling modes}
\label{tab-example}
\begin{tabular}{| c | c | c | c | c |c|}
\hline
Task &    T & $C^l$ & $C^h$ & $\chi$ & $\rho$ \\ \hline
$\pi_1$ & 20 & 5 & 7 & HC & 2 \\ \hline
$\pi_2$ & 20 & 5 & 6 & HC & 4 \\ \hline
$\pi_3$ & 20 & 5 & - & LC & 1 \\ \hline
$\pi_4$ & 20 & 4 & - & LC & 3 \\ \hline
\end{tabular}
\end{center}
\vspace{-3mm}
\end{table}

\begin{figure}
\begin{center}
\caption{Runtime example for the system in Table.~\ref{tab-example}}
\label{fig:drawback}
\includegraphics[scale=0.42]{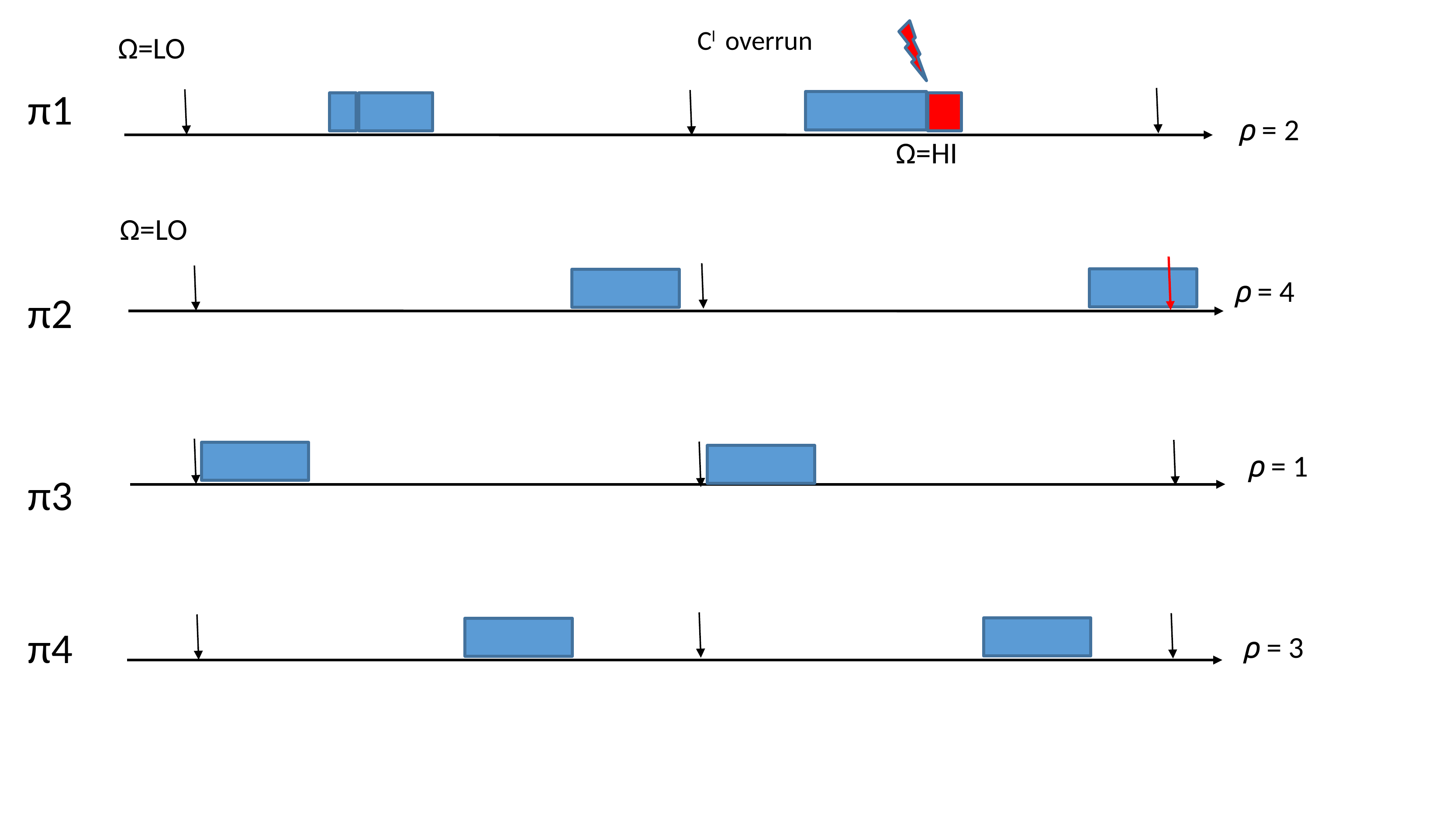}
\vspace{-5mm}
\end{center}
\end{figure}

Figure~\ref{fig:drawback} depicts a runtime example. On the first period, tasks execute according to the order of their priorities. On the second period, $\pi_1$ violates its $C^l=5$ and runs for two extra time units. This delays $\pi_4$, which in turn delays $\pi_2$ due to its lower priority. In the end, $\pi_2$ misses its deadline with one time unit. This scenario could be avoided if one would account for the feasibility of $\pi_2$, at the time point when $\pi_1$ violates $C^l$, and elevate its priority immediately. Thus, $\pi_2$ would execute before $\pi_4$ and meets its deadline.  

To summarize, our system level scheduling mode monitors the workload, for both LC and HC tasks, online and decides when to prioritize HC tasks over all LC tasks regardless of the HI/LO task modes. The system scheduling mode is effectively switched from Normal to Critical if the actual workload of LC tasks and HC tasks exceeds the resource supply for a time interval starting at the actual time point. 

\begin{figure}
\begin{center}
\caption{System scheduling mode behavior}
\label{fig:Systembehavior}
\includegraphics[scale=0.46]{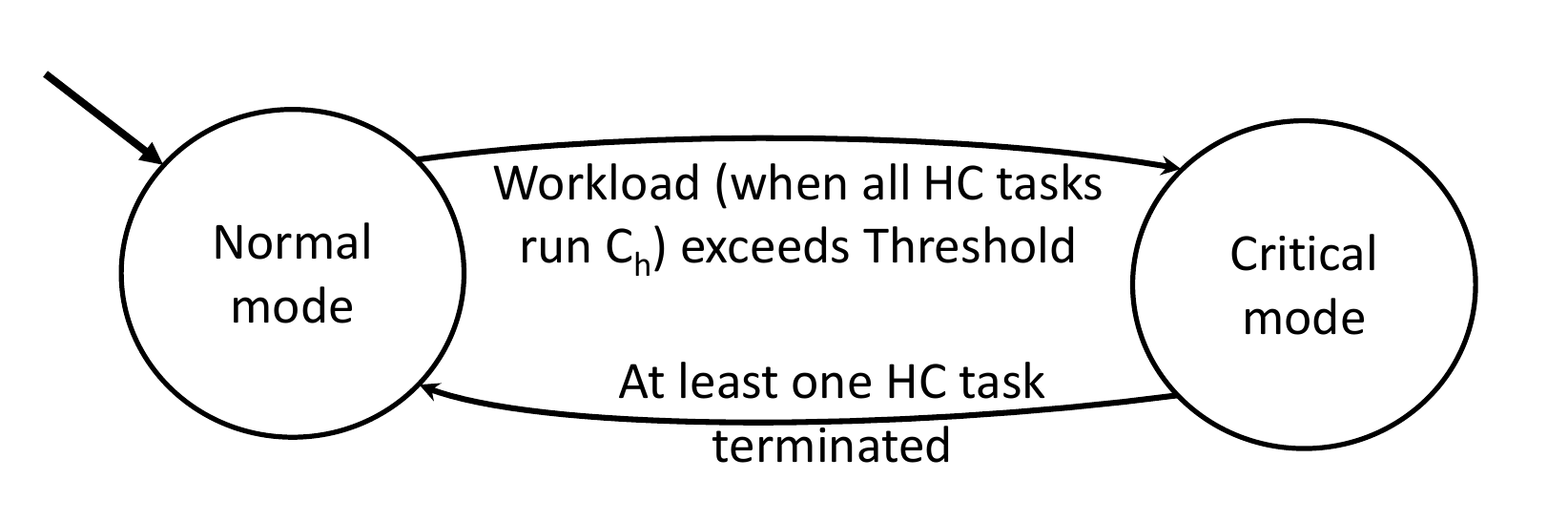}
\vspace{-7mm}
\end{center}
\end{figure}

Figure~\ref{fig:Systembehavior} shows the system mode behavior. The system is initially at Normal mode, and transits to Critical mode when the resource demand exceeds the resource supply. LC task periods are stretched accordingly, thus reducing their utilization, to make room in the schedule for HC tasks at least for their low confidence WCET $C^l$. Whenever the workload of HC tasks is relaxed, the system switches back to Normal and LC tasks can then be compensated to absorb the delay caused by stretching. 

We define the workload function $\Psi(\pi_i, [a,b])$ of a task $\pi_i$ over a time interval $[a,b]$ to be the amount of resource that can be requested by $\pi_i$. Such a workload includes the remaining execution time at time point $a$ for the current job plus the jobs to be potentially released until time instant $b$. We distinguish between $\Psi^H()$ and $\Psi^L()$ according to the task criticality and modes.
 \vspace{-6mm}

\[ 
\Psi^H(\pi_i, [a,b])= \hspace{-1mm} \left \{\hspace{-2mm} \begin{array}{l}
\small{C^h_i-\Lambda(\pi_i,a) + U_{H_i}\cdot T_i \cdot \lceil\frac{b-a}{T_i}\rceil~~\text{If}} 
\\~~~~~~~~~~~~~~~~~~~~~~~~~~~~~~ \small{(b-a)\%T_i\ge C_i^h } \\
\small{C^h_i-\Lambda(\pi_i,a) + U_{H_i}\cdot T_i \cdot \lfloor\frac{b-a}{T_i}\rfloor} 
~~ \footnotesize{\text{Otherwise}} 
\end{array}
\right .
\]

\[
\Psi^L(\pi_i, [a,b])= \hspace{-1mm} \left \{\hspace{-2mm} \begin{array}{l}
\small{C^l_i-\Lambda(\pi_i,a) + U_{L_i}\cdot T_i \cdot \lceil\frac{b-a}{T_i}\rceil~~\text{If}}  \\ ~~~~~~~~~~~~~~~~~~~~~~~~~~~~~~
\small{~(b-a) \%T_i\ge C_i^l } \\
\small{C^l_i-\Lambda(\pi_i,a) + U_{L_i}\cdot T_i \cdot \lfloor\frac{b-a}{T_i}\rfloor} ~~\footnotesize{\text{Otherwise}} 
\end{array}
\right .
\]

We define the workload of HC tasks having a high criticality than $\pi_i$ for the time interval $[t,T_i]$ as follows:
\[\vspace{-2mm}
W_{H}^h(\pi_i,t)=\sum\limits_{\pi_j\mid \chi_i=HC \wedge \Omega(\pi_j,t)=HI} \Psi^H(\pi_j, [t,T_i])
 \]
Implicitly, the time interval $[t,T_i]$ is the duration left to the expiry of the last period released by task $\pi_i$ before time point $t$, i.e. $[t~\%~T_i, T_i]$. Thus, we avoid writing the conversion absolute-relative time.    
In a similar way, we calculate the workload of HC tasks running LO mode and having higher priority than $\pi_i$, for time interval $[t,T_i]$ as follows:
\[ W_{L}(\pi_i,t)=\sum\limits_{\pi_j\mid \chi_j=LC \wedge \pi_j\in hp(\pi_i,t)} \Psi^L(\pi_j, [t,T_i])
\]

\noindent where $hp(\pi_i,t)$ is the set of tasks having a higher priority than $\pi_i$ at time point $t$. Finally, the workload of LC tasks having a higher priority than $\pi_i$ is given by: 
\vspace{-2mm}

\[W_{H}^l(\pi_i,t)=\sum\limits_{\pi_j\mid \chi_j=HC \wedge \pi_j\in hp(\pi_i,t) \wedge \Omega(\pi_j,t)=LO} \Psi^L(\pi_j, [t,T_i])
\]     

We define \texttt{DEM}$(\pi_i,t)$, an upper bound on the resource demand over a given time interval \cite{Baruah90}, of a HC task running in LO mode at any time point $t$ till the expiry of that period to be the remaining budget of such a task for the given period plus the workload of tasks having either a higher criticality or a higher priority. Namely, these are LC tasks having a higher priority, HC tasks running HI mode and HC tasks running LO mode but having higher priority than task $\pi_i$. 
\vspace{-1mm}
\[
\texttt{DEM}(\pi_i,t)=\hspace{-1mm} W_{H}^h(\pi_i,t) + W_{H}^l(\pi_i,t) + W_{L}(\pi_i,t) +C^L_i-\Lambda(\pi_i,t)
\]
 
One can see that we distinguish between HC tasks running HI, and HC tasks running LO and having higher priority than a given task. This is in fact to avoid counting the tasks satisfying both conditions twice in the workload. Given that the maximum resource amount that can be supplied to the task set during a time interval $[a,b]$ is $b-a$, the system scheduling mode switches from Normal to Critical if the workload exceeds (or is going to exceed) the resource supply.

\vspace{2mm}
\begin{center}
\fbox{ 
\parbox[t][3.3em][t]{0.4\textwidth}{
$\frac{ \begin{array}{c} \exists t~\pi_i\mid \chi_i=HC ~~\wedge \\ \Omega(\pi_i,t)=LO ~ \wedge~~  \texttt{DEM}(\pi_i,t) \ge  T_i-(t~\%~T_i) \end{array}}
{\begin{array}{c} Mode(t)\mapsto \textbf{Critical} \end{array}}$
}}
\end{center}

One can see that the load calculation, as a ground for the system mode switch, is performed on the time interval of the actual trigger task rather than classic entire busy period. This is in fact to reduce the over-approximation of the workload, given that low confidence WCET violation is non-deterministic, and deliver an exact load calculation. 

Once the system scheduling mode is switched to Critical, the periods of LC jobs will be extended with the time left of the current release ($T_i-(t~\%~T_i)$) of the HC task ($\pi_i$) causing the mode switch. 

Let us call the HC task causing the actual system mode switch a \textit{trigger} ${\mathcal T}$, and ${\mathcal S}$ the relative time instant of the corresponding mode switch \footnote{For the sake of notation, we consider ${\mathcal S}$ to be a time instant relative to the current release of the trigger task so that we avoid the conversion relative-absolute time.}. Thus, we simply write ${\mathcal T}(\pi_i,{\mathcal S})$ for a task $\pi_i$ being a trigger at time ${\mathcal S}$. In Critical mode, the system uses $Sched_C()$ to schedule tasks rather than $Sched()$ so that LC tasks do not have a chance to execute before any HC task regardless of the HC task mode and priority. This does not mean that LC tasks are discarded but rather they can execute once HC tasks are satisfied.

We define the demand bound function of a trigger task $\pi_i$ to be the workload of that task (running $C^h$) plus the workload of HC tasks running HI mode and having higher priority than $\pi_i$. 
\vspace{-2mm}
\[ \texttt{DEM}^c(\pi_i,{\mathcal S})= \hspace{-10mm}\sum\limits_{
\begin{array}{c} \pi_j\mid  \chi_j=HC \\
\wedge ~\Omega(\pi_j,{\mathcal S})=HI \\
\wedge ~\pi_j\in hp(\pi_i,{\mathcal S})
\end{array}} \hspace{-1mm} \Psi^H(\pi_j, [{\mathcal S},T_i]) + (C_i^h-\Lambda(\pi_i,{\mathcal S}))
\]   

To make room for the trigger task to fully execute just in case it violates its low confidence WCET, we consider $C_i^h$ instead of $C_i^l$ in $\texttt{DEM}^c()$ calculation. This can be an over-approximation but it is much safer and practical given that HC tasks non-deterministically run $C_i^h$. In case the trigger task sticks to its allotted execution time $C_i^l$, the surplus time is used to accommodate more LC tasks. 
The mode trigger task $\pi_i$ is schedulable (under the stretching pattern) if:
\[\texttt{DEM}^c(\pi_i,{\mathcal S})\le T_i-{\mathcal S}\]    

Whenever the current job of the trigger task expires \footnote{The period of the most recent ${\mathcal S}$.}, the system scheduling mode switches from \textbf{Critical} to \textbf{Normal}. The mode change instant is calculated from ${\mathcal S}$ with the time left to the period expiry of $\pi_i$, i.e. $t'={\mathcal S} + (T_i-{\mathcal S})$. 

\begin{center}
\fbox{ 
\parbox[t][2.4em][t]{0.43\textwidth}
{
$\frac{\begin{array}{c} \exists \pi_i\mid {\mathcal T}(\pi_i,{\mathcal S}) \wedge Mode({\mathcal S}+(T_i-{\mathcal S}))=\textbf{Critical} \end{array}}
{\begin{array}{c} Mode({\mathcal S}+(T_i-{\mathcal S}))\mapsto \textbf{Normal} \end{array}}$
}
}
\end{center}

Upon such a mode switch, the trigger task is refreshed for the new period where $\Omega({\mathcal T}, t')$ is set to LO and $\Lambda({\mathcal T},t')$ to 0. To such a purpose, we define the following function:

\[\mathtt{Refresh}(\pi_i,t)=(\Omega(\pi_i,t)\mapsto LO) \wedge (\Lambda(\pi_i,t)\mapsto 0) \]

\noindent where $\pi_i$ must be the most recent trigger task \footnote{${\mathcal T}(\pi_i,{\mathcal S}$) and $\forall t\in [{\mathcal S},T_i]~\forall \pi_j\neq \pi_i~\neg {\mathcal T}(\pi_j,t) $.} and $t$ is the mode switch-back instant (${\mathcal S}+(T_i-{\mathcal S})$). 


\paragraph{Stretching of \textbf{LC} task periods}
To guarantee the runtime resilience, our control-based scheduling algorithm stretches the current job periods of the LC tasks with the duration ($T_i-(t~\%~T_i)$), left to the expiry of the current release of the trigger HC task ($\pi_i$), when system mode switches to Critical (at time $t$).   
Once the system mode is switched back to Normal, one needs to absorb the stretching delay ($T_i-{\mathcal S}$) of LC tasks so that such tasks return to regular periodic dispatch.
 
\paragraph{Shrinking of \textbf{LC} task periods}
The shrinking rate of the LC task periods depends on the actual system workload and the length of the individual LC task periods. In fact, the shrinking is driven by the schedulability of the HC task running in LO mode and having the lowest priority, i.e. a priority lower than LC tasks. We consider the current job of such a HC task, and calculate first how would be the schedulability of that task according to the workload resulting from the shrinking of LC periods with a duration $\delta$. We start with $\delta$ equals to the stretching duration $(T_i-{\mathcal S})$, if the resulting workload is schedulable (using a \texttt{DEM}-based online schedulability test) then the shrinking is applied. Otherwise, we consider a tighter shrinking duration $\delta < T_i-{\mathcal S}$ and so on until the workload is schedulable. This binary process can end up having $\delta=0$ if the resulting workload is not schedulable for any potential shrinking duration.  

Let us assume a shrinking duration $\delta \le T_i-{\mathcal S}$ (the stretching duration due to the most recent trigger task). Let us assume also that $\eta$ is the instant of the system mode switch back to Normal mode. The shrinking with $\delta$ will be split over a number of periods each LC task can perform within the time left ($T_i-\eta$) to the expiry of the current job of the HC task running LO mode with lowest priority ($\pi_i$). The number of LC task ($\pi_j$) periods occurring within $[\eta,T_i]$, after shrinking with $\delta$, is given by $\frac{T_i-\eta+\delta}{T_j}$. Then the actual shrinking of each LC task ($T_j$) period is $\mu$ such that $\delta=\mu \cdot \frac{T_i-n}{T_j-\mu}$ which makes $\mu=\frac{T_j\cdot \delta}{T_i-\eta +\delta}$ \footnote{$\mu$ is the actual shrinking of each period of a given LC task $\pi_j$ whereas $\delta$ is the accumulated shrinking over [$\eta,T_i$].}. 

 We calculate first the resource demand $\mathtt{DEM}^{\delta}(\pi_i,n)$ of the HC task, running LO and having the lowest priority level, assuming the actual shrinking $\mu$ of LC task periods, from the mode change instant until the expiry of its current job period.
\[ \begin{array}{ll}
\mathtt{DEM}^{\delta}(\pi_i,\eta)= & W_{H}^h(\pi_i,\eta)+W_{H}^l(\pi_i,\eta)+W_{L}^{\delta}(\pi_i,\eta)+\\ &(C_i^l-\Lambda(\pi_i,\eta))
\end{array}
\]

\noindent The workload of LC tasks after shrinking is given as follows: 
\vspace{-2mm}
\[W_{L}^{\delta}(\pi_i,\eta)= \sum\limits_{\pi_j\mid \chi_j=LC \wedge \pi_j\in hp(\pi_i,\eta)} 
U_{L_j}^{\mu} \cdot (T_j-\mu) \cdot \lceil\frac{T_i-\eta}{T_j-\mu}\rceil
\]

Figure~\ref{fig:shrinking} depicts the period shrinking of two LC tasks for a total duration $\delta=12$. We omitted HC tasks and only the lowest priority HC task is depicted. The periods of $\pi_2$, released within interval [5,30], are shrunk with $\mu=6$ whereas the periods of $\pi_3$ are shrunk with $\mu=4$. Given that we have two periods of $\pi_2$, respectively three for $\pi_3$, within [5,30] thus the accumulated shrinking $2 \times 6=12$, respectively $3 \times 4=12$, equals $\delta$. 

\begin{figure}
\begin{center}
\caption{Example of LC task periods shrinking}
\label{fig:shrinking}
\vspace{-2mm}
\includegraphics[scale=0.33]{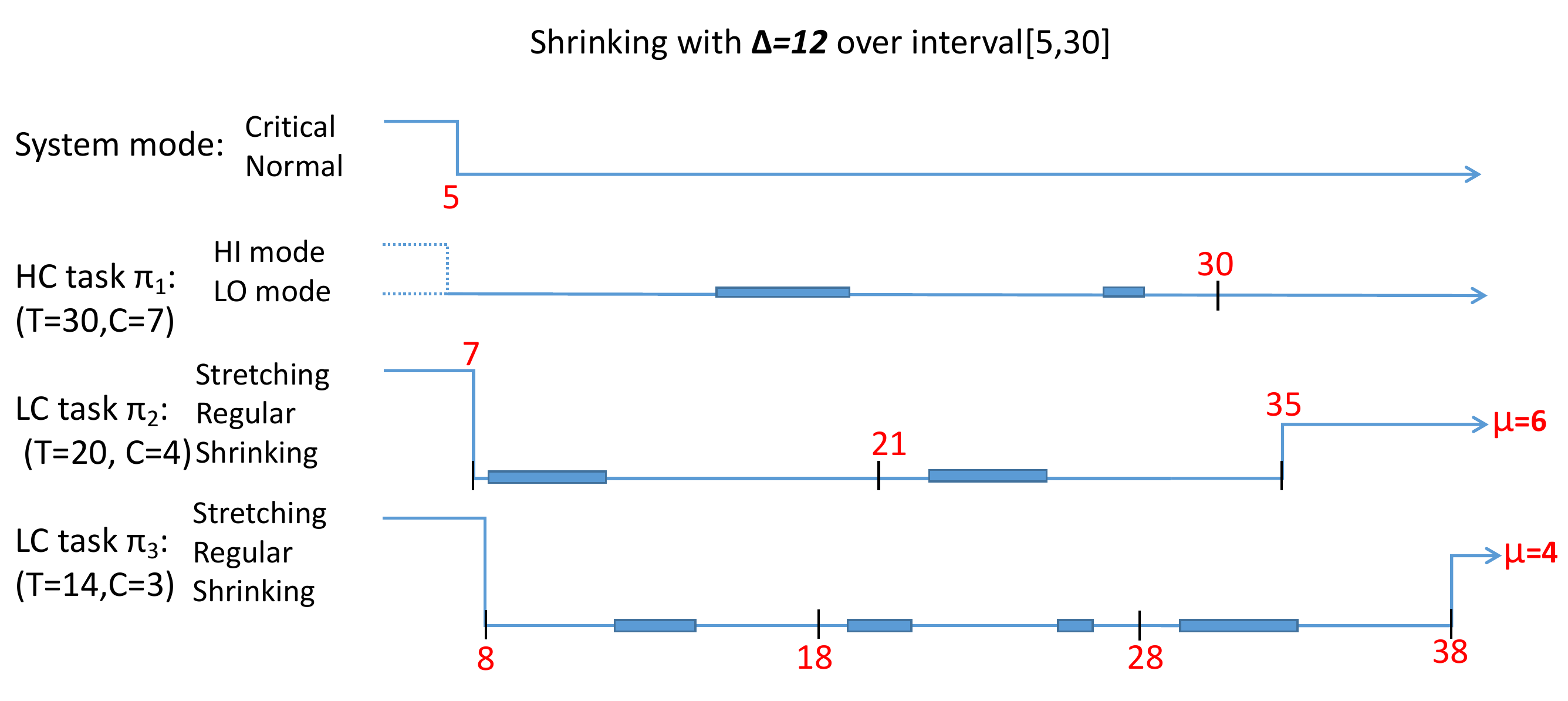}
\end{center}
\end{figure}    
 \vspace{-3mm}



\subsection{Multimode Scheduling Algorithm}
Our scheduling algorithm is a control-based where the scheduling parameters and criteria (priority only, priority and criticality, priority-criticality-mode) considered to arbitrate tasks depend on the actual system workload and task modes. The overall scheduling algorithm is depicted in Algorithm~\ref{alg:algo} where $t$ is a clock variable to model the time progress. We introduce a function ${\mathit{Use()}}$ to dictate the scheduling criteria to be used during runtime, in terms of priority, default criticality and/or runtime criticality. The corresponding scheduling function ($Sched(), Sched_I()$ or $Sched_C()$) is then accordingly applied.   

Let us introduce $lp_l(t)=\pi_i\mid~\chi_i=HC~\wedge~\Omega(\pi_i,t)=LO ~\wedge \forall \pi_j~Sched_I(\pi_i,\pi_j,t)\neq \pi_i$ to be the lowest priority HC task running LO mode. Similarly, we use $lp_h(t)=\pi_i\mid~\chi_i=HC~\wedge~\Omega(\pi_i,t)=HI ~\wedge \forall \pi_j~Sched_c(\pi_i,\pi_j,t)\neq \pi_i$ to refer to the lowest priority HC task running HI mode. Whenever the execution period of a HC task expires, we refresh the task mode accordingly to be LO. 

\begin{algorithm}
$\mathit{Init()}$;

\While{True}{
  \If{$\exists \pi_i\mid Status(\pi_i,t)=Done \wedge t\%T_i=0)$}{
	\label{alg:reset}
	$Refresh(\pi_i)$;
	}
	
	\If{$\exists \pi_i\mid \chi_i=HC \wedge\Lambda(\pi_i,t)\ge C_i^l ~\wedge Status(\pi_i,t)\neq Done$}{\label{alg:LO2HI}
  $\Omega(\pi_i,t)=HI;$ \\
	$Use(Sched^I())$;{\label{alg:LO2HI+}}
	}
	
	\If{$Mode(t)=Normal \wedge \mathtt{DEM}(lp_l(t),t)<lp_l(t).T-t\%lp_l(t).T)$ }{\label{alg:N2C}
	${\mathcal T}=lp_l(t);$ \\
	${\mathcal S}=t; $\\
	$Mode(t)=Critical;$\\
	${\mathcal P}=Stretching;$\\
	$Use(Sched^c())$;\\
  \ForEach{$\pi_j \mid \chi_j=LC$}{
	  $T_j\mapsto T_j+(lp_l(t).T-t\%lp_l(t).T)$;\\
		$\delta=\delta+({\mathcal T}.T-{\mathcal S});$ {\label{alg:N2C+}}
	  }
	}
	
	\If{$Mode(t)=Critical~\wedge~\exists \pi_i \mid {\mathcal T}(\pi_i,{\mathcal S}) \wedge t\%T_i=0$}{\label{alg:C2N}
	$Mode(t)=Normal$;\\
	${\mathcal P}=Regular$;\\
	  $\eta=t$;\\
	  \If{$\exists \pi_j \mid \Omega(\pi_j,t)=HI$}{$Use(Sched_I())$;
	  }
	  \Else{$Use(Sched())$; {\label{alg:C2N+}}
		} 
	}
			
	\If{$Mode(t)=Normal~\wedge~\delta>0$}{\label{alg:Shrink}
	  	 \If{$\texttt{DEM}^{\delta}(lp_l(t),t) \le lp_l(t).T-t$}{
		    \ForEach{$\pi_j\mid \chi_j=LC$}{
			   $T_j=T_j-\mu_j$;
			  }
		  ${\mathcal P}=Shrinking$;\\
			$\delta=0$; {\label{alg:Shrink+}}
			}
	}	
 }
\caption{Elastic multimode scheduling}
\label{alg:algo} 
\end{algorithm}

The initialization function is given by:
\[ \mathit{Init()=\left \{ 
\begin{array}{ll} 
t=0 & \wedge \\
\mathit{Mode}(t)=Normal & \wedge \\
{\mathcal P}=Regular & \wedge \\
\forall i\mid \chi_i=HC~Refresh(\pi_i,t) & \wedge \\ 
\mathit{Use(Sched())} & \\
\end{array}
\right .}
\]
The statement in line \ref{alg:reset} describes when to refresh both status and mode of each HC task upon the release of a new period. The task mode switch from LO to HI is given in lines \ref{alg:LO2HI}-\ref{alg:LO2HI+}. Lines \ref{alg:N2C}-\ref{alg:N2C+} describe a system mode switch from Normal to Critical where a shrinking operation is applied. Lines \ref{alg:C2N}-\ref{alg:C2N+} describe the system mode switch back to Normal whenever the current period of the most recent trigger task expires. Lines \ref{alg:Shrink}-\ref{alg:Shrink+} outline when a shrinking operation for the LC task periods is released.

Upon each mode switch, a refreshment of some of the tasks is performed, if needed. Moreover, the scheduling function to be employed is specified using function $Use()$

In principle, a shrinking is applied as long as the stretching duration $\delta$ is not completely amortized. To simplify the algorithm, we have specified a one-go shrinking action, but the shrinking might be performed on several chunks due to preemption of the system Normal mode. This can be achieved using an extra variable to track the accumulated stretching delays.

\section{Schedulability Analysis}
\label{sec:analysis}
In this section we show how to analyze the schedulability of MCS running our new scheduling algorithm. Our schedulability analysis is in fact an online test checking the actual workload of the different modes and compare it against the resource supply that can be provided for each mode during a given time interval. We consider the mode switch instants to be the ground to calculate both demand and supply bound functions for our online schedulability test. This makes our schedulability test applicable no matter of how many mode switches happen during the system execution.

The ultimate goal of our algorithm and the underlying schedulability analysis is:
\begin{itemize}
\item guarantee the feasibility of HC tasks under all potential modes and patterns, i.e. $\forall t~\pi_i\mid \chi_i=HC, t ~\% ~ T_i=0\Rightarrow Status(\pi_i,t)=Done$.
\item minimize the degradation of LC tasks, and compensate for all potential degradation.
\end{itemize}

To perform the schedulability test, we define the demand bound function \texttt{DBF($\pi_i,[t,t+z]$)} to be the resource demand \texttt{DEM}($\pi_i,t$) of a HC task $\pi_i$ for the entire busy period $z$ starting at time instant $t$. We simply write:
\vspace{-2mm}
\[\texttt{DBF}(\pi_i,[t,t+z])=\texttt{DEM}(\pi_i,t|\Psi(\pi_i,[t,T_i\mapsto t+z]))\]

\noindent $\texttt{DBF}^c(\pi_i,[t,t+z])$ and $\texttt{DBF}^{\delta}(\pi_i,[t,t+z])$ are accordingly built on $\texttt{DEM}^{c}(\pi_i,t)$ and $\texttt{DEM}^{\delta}(\pi_i,t)$ respectively. $t$ is the time instant of the Normal mode release, which could be either "0" for the initial system release or a time instant where the system mode switches back to Normal.  

A given system remains under Normal mode as long as all HC tasks are schedulable, $\mathtt{DBF}()$ of the lowest priority HC task $\pi_i$ does not exceed the potential resource supply for the time interval $[t,T_i]$. 
To check schedulability, regardless of the individual task modes, we analyze $\mathtt{DBF}()$ of the lowest priority HC task.   

\begin{theorem}[Schedulability under Normal mode] \label{theo1}
The HC taskset is schedulable when the system runs in mode \textbf{Normal}, with at least one \textbf{HC} task under mode \textbf{LO}, if the following holds: 
\[\begin{array}{c} 
\forall t~Mode(t)=Normal~\forall \pi_i\mid \Omega(\pi_i,t)=LO ~\wedge~lp_l(t)=\pi_i \\ ~~~~~~~~~~~~~~~~~~ \text{and}~  \texttt{DBF}(\pi_i,[t,t+z]) \le z
\end{array}\]
\end{theorem}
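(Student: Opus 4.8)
The plan is to reduce schedulability to a demand-versus-supply comparison over the busy period, exploiting that in \textbf{Normal} mode the uniprocessor is fully available. First I would observe that over any window $[t,t+z]$ the supply bound is exactly $z$ (one unit of computation per time unit), so the \textbf{HC} taskset is schedulable precisely when the accumulated demand that must complete inside the busy period never exceeds $z$. The hypothesis ``at least one \textbf{HC} task under mode \textbf{LO}'' guarantees that $lp_l(t)$ exists, so the quantifier $\forall \pi_i\mid\Omega(\pi_i,t)=LO\wedge lp_l(t)=\pi_i$ singles out exactly the lowest-priority \textbf{HC} task running \textbf{LO}, and the test $\texttt{DBF}(\pi_i,[t,t+z])\le z$ is well defined at every $t$.

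Next I would characterize the interference seen by $\pi_i=lp_l(t)$. Because $\pi_i$ has the lowest priority among \textbf{HC} tasks in \textbf{LO} mode, every task that can delay it falls into exactly one of three classes: an \textbf{HC} task in \textbf{HI} mode (hoisted above all \textbf{LO}/\textbf{LC} work by $\mathit{Sched}_I$), an \textbf{HC} task in \textbf{LO} mode of higher priority, or an \textbf{LC} task of higher priority. These are precisely the summands of $W_{H}^h$, $W_{H}^l$ and $W_{L}$, and adding the residual budget $C_i^L-\Lambda(\pi_i,t)$ recovers $\texttt{DEM}(\pi_i,t)$; extending each $\Psi^H,\Psi^L$ over the busy period of length $z$ yields $\texttt{DBF}(\pi_i,[t,t+z])$. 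I would argue this is a genuine upper bound via the standard critical-instant argument, namely that a synchronous release of all higher-priority and all \textbf{HI}-mode interferers at $t$ maximizes the demand falling in the window, treating the ceiling/floor case split inside $\Psi^H,\Psi^L$ as the routine accounting it is. Since $\texttt{DBF}(\pi_i,[t,t+z])\le z$ while the supply equals $z$, the demand curve stays below the supply curve throughout the busy period, so $\pi_i$ completes before its (implicit) deadline.

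To lift this to the whole \textbf{HC} taskset I would split the remaining \textbf{HC} tasks. Those in \textbf{HI} mode run at top priority under $\mathit{Sched}_I$, and each one's contribution is already a summand of $W_{H}^h\le z$, so it finishes no later than in the bound just established. Those in \textbf{LO} mode with priority higher than $\pi_i$ are preempted by a \emph{strict subset} of $\pi_i$'s interferers, so the demand that must complete before such a task's deadline is a sub-sum of $\texttt{DBF}(\pi_i,\cdot)$ evaluated at the appropriate instant, and the per-task processor-demand criterion then certifies it as well.

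The hard part will be making this last lifting step rigorous. For unrestricted fixed priorities with implicit deadlines, schedulability of the single lowest-priority task does not by itself imply schedulability of a higher-priority task with a tighter period, so the argument cannot rest on ``lowest priority is the bottleneck'' alone; it must lean on the particular structure here — the \textbf{HI}-mode workload being separately and fully accounted for in $W_{H}^h$, and the test being re-evaluated at \emph{every} $t$, so that any change of the mode map $\Omega(\cdot,t)$ (which changes $lp_l(t)$) re-establishes the inequality. I would therefore phrase the final step as an explicit sub-sum domination: for each higher-priority \textbf{HC} task the demand due before its deadline is bounded by the corresponding portion of $\texttt{DBF}(lp_l(t),[t,t+z])$, after which the standard criterion applies per task. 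Pinning down this sub-sum relation across the mode configurations is where the real work lies, whereas the supply bound and the $\Psi^H,\Psi^L$ accounting are mechanical.
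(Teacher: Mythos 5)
Your first half --- the decomposition of $\texttt{DEM}(\pi_i,t)$ into $W_{H}^h+W_{H}^l+W_{L}$ plus the residual budget $C_i^L-\Lambda(\pi_i,t)$, the observation that the uniprocessor supplies exactly $z$ over $[t,t+z]$, and the conclusion that $lp_l(t)$ itself completes within the busy period --- is essentially the paper's own argument, which additionally notes that under fixed priorities $lp_l(t)$ is invariant over $[t,t+z]$ so the interferer classes do not change mid-window. Where you diverge is in being candid about the lifting step, and there you have put your finger on a real weakness: the paper closes with the one-line assertion that if $\pi_i$ is schedulable then every task of higher priority or higher mode is schedulable, with no further justification --- i.e.\ it rests on exactly the ``lowest priority is the bottleneck'' reasoning that you correctly observe is not valid in general for fixed-priority scheduling with per-task implicit deadlines.

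Your candidate repair (sub-sum domination) does not close the gap either. Knowing that the demand attributable to a higher-priority HC task $\pi_j$ is a summand of a quantity bounded by $z$ only certifies that this demand fits somewhere in a window of length $z$ anchored to the lowest-priority task's period; it says nothing about whether $\pi_j$'s job completes by its own, possibly much shorter, period $T_j$. The same objection applies to your treatment of the HI-mode tasks: $W_{H}^h\le z$ bounds their aggregate contribution over $[t,t+z]$, not the response time of each within its own deadline window. A genuine closure would require a per-task check (response-time analysis, or a demand test over each $\pi_j$'s own scheduling window counting only the interferers above $\pi_j$ in the $\mathit{Sched}_I$ order), and re-evaluating the theorem's test at every $t$ does not substitute for this, because the test is always instantiated at $lp_l(t)$, which under fixed priorities never changes. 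So the step you flag as ``where the real work lies'' is indeed left unproven --- in your write-up and, it should be said, in the paper's proof as well.
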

\begin{proof}
It is trivial. Given that $\pi_i$ is the least priority ($lp_l(t)$) HC task ($\Omega(\pi_i,t)=LO$), then $\forall \pi_j\neq \pi_i~\pi_j\in hp(\pi_i,t)$. Since we only consider fixed priority policies, thus $lp_l(t)=\pi_i \Rightarrow lp_l(t'\in [t,t+z])=\pi_i$, i.e $\pi_i$ remains the lowest priority HC task over [t,t+z]. From $\texttt{DBF}(\pi_i,[t,t+z])$ definition $W^h_H(\pi_i,t)$ and $W^l_H(\pi_i,t)$  \footnote{With $\Psi^H(\pi_i,[t,z])$ and $\Psi^L(\pi_i,[t,z])$ calculated for the entire busy period.} include the workload of each newly released HC job in the time interval [t,t+z] having either a higher priority ($\pi_j\in hp(\pi_i,t) \wedge \Omega(\pi_j,t)=LO$) or a higher task mode ($\Omega(\pi_j,t)=HI$), and the execution budget left for the actual period of time instant $t$ ($C^L_i-\Lambda(\pi_i,t)$). Thus, if $\pi_i$ is schedulable then $\forall \pi_j\mid Sched({i,j},t'\in[t,t+z])=\pi_j \wedge \Omega(\pi_j)\ge \Omega(\pi_i)$ is schedulable. 
\end{proof}
This Theorem implies that, in case the lowest priority task is a high critical, the schedulability test includes all HC and LC tasks. Thus, the schedulability of HC tasks implies the schedulability of the entire task set. 

In case the system is in Normal mode but all HC tasks run mode HI, there is no point to consider LC tasks as any HC task has priority over all LC tasks. 

\begin{theorem}[Schedulability when all \textbf{HC} tasks run \textbf{HI} mode] 
The \textbf{HC} taskset is schedulable when the system runs in mode \textbf{Normal}, with all \textbf{HC} tasks under mode \textbf{HI}, if the following holds: 
\[\begin{array}{c}
\forall t~Mode(t)=Normal~\wedge~\forall \pi_j~\Omega(\pi_j,t)=HI \\ ~~~~~~~~~~~~~~~~~~ \text{and}~
  \texttt{DBF}^c(lp_h(t),[t,t+z]) \le z
\end{array}\]
\end{theorem}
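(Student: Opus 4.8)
The plan is to mirror the proof of Theorem~\ref{theo1}, exploiting the fact that the hypothesis $\forall \pi_j~\Omega(\pi_j,t)=HI$ collapses the analysis to a pure fixed-priority argument over the HC taskset alone. First I would observe that when every HC task is in mode HI, each of them enjoys scheduling privilege over all LC tasks under $\mathit{Sched}_I()$ (and a fortiori under $\mathit{Sched}_C()$). Consequently no LC job can ever preempt or delay an HC job, so the LC workload contributes nothing to the interference seen by any HC task and may be dropped from the demand entirely. This is precisely why the statement uses $\texttt{DBF}^c$, built on $\texttt{DEM}^c$, which sums only the $\Psi^H$ contributions of higher-priority HC tasks plus the residual budget $C_i^h-\Lambda(\pi_i,t)$ of the task under test, rather than the full $\texttt{DBF}$ of Theorem~\ref{theo1}.

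Next I would fix $\pi_i=lp_h(t)$, the lowest-priority HC task running HI mode. Since all HC tasks are in HI mode, $\pi_i$ is in fact the lowest-priority HC task overall, so every other HC task lies in $hp(\pi_i,t)$. Because the underlying policy is fixed priority, this ordering is stable: $lp_h(t)=\pi_i \Rightarrow lp_h(t')=\pi_i$ for all $t'\in[t,t+z]$. The demand $\texttt{DEM}^c(\pi_i,t)$, extended to the busy period $z$ inside $\texttt{DBF}^c(\pi_i,[t,t+z])$, therefore accounts for the entire HC interference: each higher-priority HC task is in HI mode and its full workload is captured by $\Psi^H(\pi_j,[t,t+z])$, while the residual $C_i^h-\Lambda(\pi_i,t)$ covers the worst-case remaining execution of $\pi_i$ itself assuming it runs its high-confidence budget.

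I would then close the argument exactly as in Theorem~\ref{theo1}: since the maximum resource that can be supplied over $[t,t+z]$ is $z$, the hypothesis $\texttt{DBF}^c(\pi_i,[t,t+z])\le z$ guarantees that the lowest-priority HC task $\pi_i$ completes within its busy period, hence meets its deadline. A standard fixed-priority monotonicity step finishes the proof: if the lowest-priority HC task is schedulable against the combined higher-priority HC demand, then every higher-priority HC task---facing strictly less interference---is schedulable as well, so the entire HC taskset is feasible.

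The step I expect to require the most care is the first one, namely the formal justification that LC tasks can be discarded from the demand. It hinges on the claim that, under $\Omega(\pi_j,t)=HI$ for all HC $\pi_j$, the scheduler never dispatches an LC job while any HC job is ready; this needs the definition of $\mathit{Sched}_I()$ (in which HI dominates LO and LC) together with the invariant that the all-HI condition persists across the window $[t,t+z]$. Once this interference-freedom is established the remainder is routine, which is why---as with Theorem~\ref{theo1}---the proof is essentially a specialization of the fixed-priority busy-period test to a single criticality class.
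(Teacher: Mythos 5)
Your proposal is correct and follows exactly the reasoning the paper intends: the paper's own proof is literally ``It is trivial,'' with the justification you spell out (all-HI HC tasks dominate LC tasks, so the test reduces to the fixed-priority busy-period argument of Theorem~\ref{theo1} restricted to $\texttt{DBF}^c$ of $lp_h(t)$) given only informally in the sentence preceding the theorem. Your elaboration, including the care about why LC interference can be dropped, is a faithful and more complete rendering of the same approach.
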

\begin{proof} It is trivial.
\end{proof}

In a similar way, the schedulability of the HC taskset under shrinking pattern is defined by the schedulability of the lowest priority HC task running LO mode. This is because such a task is comparable to LC tasks, thus it can be affected by the shrinking workload.  

\begin{theorem}[Schedulability under Shrinking pattern]
\textbf{HC} taskset is schedulable when the system runs a shrinking with a delay $\delta$ if: 
\[\begin{array}{c}
\forall t~Mode(t)=Normal ~\wedge~{\mathcal P}=Shrinking \Rightarrow \\ ~~~~~~~~~~~~~~~~~~ 
  ~~~~ \texttt{DBF}^{\delta}(lp_l(t),[t,t+z]) \le z
\end{array}\]
\end{theorem}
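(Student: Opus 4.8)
The plan is to follow the same reduction as in Theorem~\ref{theo1}, taking the lowest-priority HC task running LO mode, $\pi_i=lp_l(t)$, as the single witness whose schedulability entails that of the whole HC taskset during a shrinking phase. First I would observe that because $\pi_i$ is the least-priority HC task in LO mode, every other HC task is either in HI mode or has higher priority while in LO mode, and every LC task that can interfere has higher priority; hence all interfering tasks lie in $hp(\pi_i,t)$ or outrank $\pi_i$ by task mode. Since the underlying policy is fixed priority, $lp_l(t)=\pi_i$ implies $lp_l(t')=\pi_i$ for all $t'\in[t,t+z]$, so $\pi_i$ remains the bottleneck throughout the busy window and no newly released job can displace it from being lowest priority.

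Second, I would argue that the shrinking pattern jeopardises only this witness task. A HC task in HI mode has criticality-and-mode precedence over all LC tasks, so the stretching/shrinking of LC jobs cannot affect it; likewise, any HC task in LO mode with priority above $\pi_i$ already has its demand subsumed inside $\texttt{DBF}^{\delta}(\pi_i,\cdot)$ through the $W_H^l$ term. The only task comparable to the (shrunk) LC jobs is precisely $\pi_i$, which under $Sched$ competes with LC tasks on priority alone. The quantity $\texttt{DBF}^{\delta}(\pi_i,[t,t+z])$ is built on $\texttt{DEM}^{\delta}(\pi_i,\eta)=W_H^h+W_H^l+W_L^{\delta}+(C_i^l-\Lambda(\pi_i,\eta))$, where $W_L^{\delta}$ replaces the regular LC contribution by the denser one induced by the shrunk periods $(T_j-\mu)$, with $\mu=\frac{T_j\cdot\delta}{T_i-\eta+\delta}$. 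Thus $\texttt{DBF}^{\delta}$ already encodes the worst-case elevated LC demand over the interval, and the hypothesis $\texttt{DBF}^{\delta}(\pi_i,[t,t+z])\le z$ states exactly that this elevated demand fits within the resource supply $z$. Because every interfering job's workload is counted in $\texttt{DBF}^{\delta}$, schedulability of $\pi_i$ carries over to all tasks of higher priority or higher mode, completing the reduction.

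The hard part will be the soundness of $W_L^{\delta}$ as an upper bound: I must verify that $U_{L_j}^{\mu}\cdot(T_j-\mu)\cdot\lceil\frac{T_i-\eta}{T_j-\mu}\rceil$ over-approximates the true cumulative execution requested by each shrunk LC task $\pi_j$ across $[\eta,T_i]$, i.e. that the ceiling correctly counts the maximal number of shrunk releases and that no release is under-counted when the shrinking is split over several chunks (as noted after Algorithm~\ref{alg:algo}). I would also need to confirm that the algorithmic gate in lines~\ref{alg:Shrink}--\ref{alg:Shrink+}, which commits a shrinking $\delta$ only when $\texttt{DEM}^{\delta}(lp_l(t),t)\le lp_l(t).T-t$ holds, lets the instantaneous test extend to the whole busy window; this is where the fixed-priority invariance of $lp_l$ does the essential work, lifting the point test $\texttt{DEM}^{\delta}\le T_i-\eta$ to the interval statement $\texttt{DBF}^{\delta}\le z$. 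Everything else is the routine demand-versus-supply comparison already used in Theorem~\ref{theo1}.
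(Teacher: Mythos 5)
Your proposal is correct and follows exactly the route the paper intends: the paper's own proof of this theorem is the single line ``It is similar to that of Theorem~\ref{theo1},'' i.e.\ the same reduction to the lowest-priority HC task in LO mode as witness, with fixed-priority invariance of $lp_l$ over the busy window and the observation that $\texttt{DBF}^{\delta}$ subsumes all interfering workload, now with $W_L^{\delta}$ in place of $W_L$. In fact you supply strictly more detail than the paper does, and the two obligations you flag as ``the hard part'' (soundness of $W_L^{\delta}$ as an upper bound on the shrunk LC demand, and lifting the instantaneous test to the interval statement) are left unaddressed by the paper as well.
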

\begin{proof} It is similar to that of Theorem.~\ref{theo1}.
\end{proof}
Again, this theorem implies not only the schedulability of HC tasks but the schedulability of the entire task set in case the lowest priority task of $\Pi$ is a HC task.
 
Whenever a HC task, running in mode LO, is jeopardized to miss its deadline under mode Normal our scheduling algorithm anticipates a system mode change to \textbf{Critical}. Thus, HC taskset is schedulable under Critical mode if the lowest priority HC task running in mode LO, known as a trigger task, is schedulable. 

\begin{theorem}[Schedulability under critical mode]
\textbf{HC} taskset is schedulable when the system runs \textbf{Critical} mode if: 
\[\begin{array}{c}
\forall t~Mode(t)=Critical, ~\exists \pi_i\mid \Omega(\pi_i,t)=LO ~\wedge~ \\
~~~~ \forall \pi_j\mid \chi_j=HC,~ Sched_C(\pi_i,\pi_j,t)\neq \pi_i \Rightarrow \\ 
\texttt{DBF}^c(\pi_i,[t,t+z]) \le z
\end{array}\]
\end{theorem}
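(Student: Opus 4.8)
The plan is to reprise the argument of Theorem~\ref{theo1}, transported to the Critical-mode precedence order imposed by $\mathit{Sched}_C$. First I would decode the quantified hypothesis: the clause $\forall \pi_j\mid\chi_j=HC,\ \mathit{Sched}_C(\pi_i,\pi_j,t)\neq\pi_i$ states that every HC task is preferred to $\pi_i$ under $\mathit{Sched}_C$. Since $\Omega(\pi_i,t)=LO$, this forces $\pi_i$ to the bottom of the HC precedence order: it is outranked by every HI-mode HC task (on mode) and by every higher-priority LO-mode HC task (on priority), and no lower-priority LO-mode HC task can exist. Thus $\pi_i$ is exactly the trigger $lp_l(t)$ of the Normal$\to$Critical rule, now examined over a full busy period of length $z$.

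Second, I would isolate the two facts that make $\pi_i$ the binding constraint. Because the base policy is fixed-priority and task modes move only LO$\to$HI within a period, the collection of HC tasks outranking $\pi_i$ can only enlarge over $[t,t+z]$, so $\pi_i$ stays the least-preferred HC task throughout the interval; any mid-interval promotion of another task to HI merely injects workload that $\Psi^H$ already over-approximates. Moreover, under $\mathit{Sched}_C$ every HC task strictly precedes every LC task, so the Critical-mode stretching evicts all LC demand from $[t,t+z]$ and the full supply $z$ is reserved for HC work; LC tasks contribute nothing to the interference on $\pi_i$.

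Third, I would argue that $\texttt{DBF}^c(\pi_i,[t,t+z])$ is a tight upper bound on the work that must complete before $\pi_i$ does. In Critical mode the trigger is elevated to HI-level precedence and charged its worst-case residual budget $C_i^h-\Lambda(\pi_i,t)$, the $C^h$ guarding against a possible overrun. Once $\pi_i$ carries HI precedence, every lower-priority HI-mode task and every LO-mode task is dominated by it and hence cannot delay it; the only interfering jobs are those of higher-priority HI-mode HC tasks, which is precisely the term $\sum \Psi^H(\pi_j,[t,T_i\mapsto t+z])$ of $\texttt{DEM}^c$ lifted to the busy period. Hence $\texttt{DBF}^c(\pi_i,[t,t+z])\le z$ asserts that this aggregate demand fits inside the supply, so $\pi_i$ completes by its deadline.

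Finally I would close the argument: every HC task preferred to $\pi_i$ lies ahead of it in the $\mathit{Sched}_C$ order and therefore sees only a subset of this interference, so its own demand-bound is dominated by $\texttt{DBF}^c(\pi_i,[t,t+z])$ and it meets its deadline as well, yielding schedulability of the whole HC taskset. The step I expect to be the real obstacle is the third: justifying that substituting $C_i^h$ for $C_i^l$ \emph{together with} the trigger's promotion to HI precedence renders the higher-priority HI-mode workload the \emph{complete} interference set — in particular, ruling out any residual contribution from the higher-priority LO-mode HC tasks that are nominally above $\pi_i$ yet are dominated once $\pi_i$ is promoted. This is exactly the place where the proof must invoke the precise Critical-mode semantics rather than a generic fixed-priority response-time bound.
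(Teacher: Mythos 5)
Your argument follows essentially the same route as the paper's (much terser) proof: the $\mathit{Sched}_C$ condition places $\pi_i$ at the bottom of the HC precedence order, $\texttt{DBF}^c$ collects the higher-priority HI-mode interference plus the trigger's residual $C^h_i-\Lambda(\pi_i,t)$ budget, and every other HC task sees only a subset of that interference and is therefore schedulable whenever $\pi_i$ is. The obstacle you flag in your third step --- that $\texttt{DEM}^c$ omits the workload of higher-priority LO-mode HC tasks, which under $\mathit{Sched}_C$ still outrank a LO-mode $\pi_i$ unless the trigger is explicitly promoted --- is a genuine weak point, and the paper's one-line proof does not resolve it either; it simply asserts that $\texttt{DBF}^c$ accounts for all relevant interference by definition.
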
 
\begin{proof} The condition $\forall \pi_j\mid \chi_j=HC~ Sched_C(\pi_i,\pi_j,t)\neq \pi_i$ implies that $\pi_i$ is either the lowest  priority HC task or the HC task having the lowest task mode ($\Omega(\pi_i,t)=LO$) given that $Sched_C()$ relies on both task runtime mode and priority. By definition $\texttt{DBF}^c(\pi_i,[t,t+z])$, includes the workload of all HC tasks $\pi_j\mid  \chi_j=HC \wedge ~\Omega(\pi_j,t)=HI ~\wedge ~\pi_j\in hp(\pi_i,t)$. Thus, if $\pi_i$ is schedulable then any other HC task will be schedulable.  
\end{proof}

\section{Case Study}
\label{sec:casestudy}
To study the applicability and performance of our multimode scheduling algorithm and show the underlying schedulability analysis, we have analyzed an actual example from the avionic domain \cite{Dodd2006}. The most relevant attributes of the task set description are given in Table~\ref{tab:casestudy}.

\begin{table}
\centering
\caption{Task attributes of the case study}
\label{tab:casestudy}
\begin{tabular}{|l|c|c|c|c|c|}
\hline 
\textbf{Task} & $\chi$ & $T$ & $C^l$ & $C^h$ & $\rho$ \\ \hline
Aircraft flight data($\pi_1$) & HC & 55 & 8 & 8.9  & 6\\ \hline
Steering($\pi_2$) & HC& 80 & 6 & 6.3 & 9\\ \hline 
Target tracking($\pi_3$) & HC & 40 & 4  & 4.2 & 3 \\ \hline 
Target sweetening($\pi_4$) & HC & 40 & 2 & 2 & 4 \\ \hline  
AUTO/CCIP toggle($\pi_5$) & HC & 200 & 1 & 1 & 12\\ \hline 
Weapon trajectory($\pi_6$) & HC & 100 & 7 & 7.5 & 10 \\ \hline 
Reinitiate trajectory($\pi_7$) & LC & 400 & 6.5 & - & 14 \\ \hline
Weapon release($\pi_8$) & HC & 10 & 1 & 1.2 & 1\\ \hline
HUD display($\pi_9$) & LC & 52 & 6 & - & 7 \\ \hline 
MPD tactical display($\pi_{10}$) & LC & 52 & 8 & - & 8\\ \hline 
Radar tracking($\pi_{11}$) & HC & 40 & 2 & 2.2 & 2 \\ \hline 
HOTAS bomb button ($\pi_{12}$) & LC & 40 & 1 & - & 5 \\ \hline
Threat response display($\pi_{13}$) & LC & 100 & 3 & - & 11\\ \hline
Poll RWR($\pi_{14}$) & LC & 200 & 2 & - & 13 \\ \hline 
Perodic BIT($\pi_{15}$) & LC & 1000 & 5 & - & 15\\ \hline 
\end{tabular}
\end{table}

 We have synthetically calculated $C^h$ from $C^l$ by considering the worst case response time of data fetching. The original taskset description of \cite{Dodd2006} states how many data each task exchanges during each period. The best case response time of data fetching is instantaneous whereas the worst case response time is $20\mu s$ for data words, $40\mu s$ for a command and $40\mu s$ for a status. The scheduling policy adopted to schedule the task set is FP (fixed priority). 

To analyze the case study, we have mechanized the system model and scheduling algorithms in Uppaal \cite{Behrmann2004}. When we run the taskset using a classic priority-based scheduling, tasks $\pi_{10}$ and $\pi_{11}$ miss their deadlines making thus the system not schedulable. When the system runs fixed priority policy with \textit{task level} scheduling mode only, task $\pi_{10}$ \textit{misses its deadline} (response time~106). 

When the taskset runs the \textit{system-level} scheduling mode, all HC tasks meet their deadlines whereas multiple LC jobs are discarded to achieve the schedulability of HC tasks. The number of LC task jobs discarded is depicted in Fig.~\ref{fig:comparison}.

When the system runs our \textit{multimode scheduling algorithm}, all the high criticality tasks meet their deadlines. To achieve the schedulability of the HC tasks, our scheduling algorithm postpones the execution of some of the LC tasks. We consider each postponing operation with a delay longer than the corresponding LC task slack time to be a discard case. This is because a delay longer than the available slack time will absolutely lead the task execution to miss its deadline. The number of LC task jobs discarded by our algorithm  is depicted in Figure~\ref{fig:comparison}. 
 
Compared to the state of the art, for the given case study, our multimode scheduling algorithm guarantees the schedulability of all HC tasks whereas Task-level scheduling algorithms do not. Moreover, the discard rate of the LC task jobs achieved by our algorithm is 1.0\% to 4.58\% whereas the discard rate achieved by the state of the art system-level bi-mode scheduling \cite{zeroslack,Facs16} is 2.1\% to 11.5\%. The discard rate is calculated to be the number of jobs discarded to the total number of jobs released. 

An important observation from this experiment is that, although the proposed algorithm achieves less discards to low criticality tasks, it requires around 30\% extra overhead compared to most of the state of the art algorithms. By overhead we mean the data size to track the system runtime and the time to process such data. Thus, the combination of task-level and system-level mode switches is \textit{not efficient} in  making real-time scheduling decisions. Another observation is that the compensation of LC tasks is slow given that LC tasks have the period lengths comparable to the period of the lowest priority HC task.     

\begin{figure}
\centering
\caption{Comparison of the LC task jobs discarded}
\label{fig:comparison}
\includegraphics[scale=0.42]{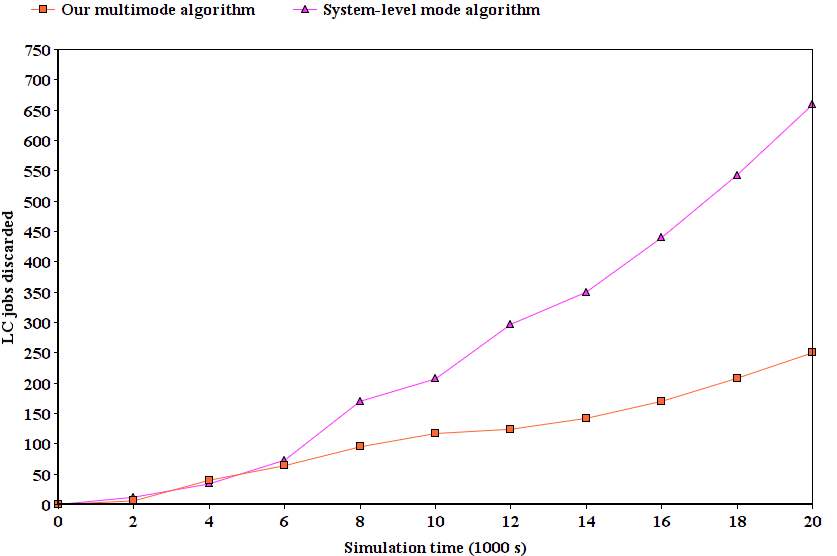}
\end{figure}

\section{Conclusion}
\label{sec:conclusion}
This paper introduced a flexible multimode scheduling algorithm for mixed criticality systems by combining the system-level and task-level mode switch techniques. The proposed algorithm relies on a job-level mode switch, where we restrict the HC task behavior to only the job that either exceeds its low confidence WCET or triggers a system mode switch. This technique provides an exact schedulability test for the system mode switches. Low criticality tasks are not discarded under critical mode, rather their periods are stretched to loosen the underlying workload. Such tasks are later compensated for the degradation, due to stretching, by shrinking their subsequent periods accordingly. We have mechanized our new multimode scheduling algorithm in Uppaal and analyzed an actual avionic system component as a case study.

The efficiency of our elastic algorithm remains in the fact that considering a short range load calculation of high criticality tasks leads to accurate and non-aggressive system mode switches.  

Although combining task-level and system-level scheduling modes offers a higher flexibility and accuracy, it experiences a heavy overhead to calculate real-time scheduling decisions. Thus, such a combination is not suitable for the scheduling of safety critical real-time systems.

As a future work, we aim to study potential optimizations of the proposed algorithm overhead. 

\section*{Acknowledgment}{This work was in part funded by Independent Research Fund Denmark under grant number DFF-7017-00348, Compositional Verification of Real-time MULTI-CORE SAFETY Critical Systems.}
\bibliographystyle{abbrv}
\bibliography{Biblio}
\end{document}